\pgfplotsset{compat=newest}
\definecolor{tudcyan}{RGB}{0,166,214}
\definecolor{tudmagenta}{RGB}{109,23,127}
\definecolor{tudpurple}{RGB}{29,28,115}
\definecolor{tudgraygreen}{RGB}{107,134,137}
\colorlet{lighttudcyan}{tudcyan!20}
\colorlet{lighttudmagenta}{tudmagenta!20}
\newlength{\hatchspread}
\newlength{\hatchthickness}
\newlength{\hatchshift}
\newcommand{\hatchcolor}{}
\tikzset{hatchspread/.code={\setlength{\hatchspread}{#1}},
	hatchthickness/.code={\setlength{\hatchthickness}{#1}},
	hatchshift/.code={\setlength{\hatchshift}{#1}},
	hatchcolor/.code={\renewcommand{\hatchcolor}{#1}}}
\tikzset{hatchspread=7pt,
	hatchthickness=0.5pt,
	hatchshift=0pt,
	hatchcolor=black}
\def\centerarc[#1](#2)(#3:#4:#5)
	\providecommand\BibTeX{{%
			\normalfont B\kern-0.5em{\scshape i\kern-0.25em b}\kern-0.8em\TeX}}}
\newtheorem{assum}{Assumption}
\newtheorem{defn}{Definition}
\newtheorem{rem}{Remark}
\newtheorem{prop}{Proposition}
\newtheorem{thm}{Theorem}
\newcommand*{\tran}{^{\mkern-1.5mu\mathsf{T}}\!}  
\def\d{\ensuremath{\mathrm{d}}}
\DeclareMathOperator{\Post}{Post}
\DeclareMathOperator{\LimAvg}{LimAvg}
\DeclareMathOperator{\MAC}{MAC}
\DeclareMathOperator{\bigO}{\mathcal{O}}
\def\norm[#1]{\left|#1\right|}
\def\shortnorm[#1]{|#1|}
\def\bisim{\ensuremath{\cong}}
\def\dummy{}
\def\Xs{\mathcal{X}}
\def\Ys{\mathcal{Y}}
\def\Us{\mathcal{U}}
\def\Vs{\mathcal{V}}
\def\No{\mathbb{N}_{0}}
\def\N{\mathbb{N}}
\def\R{\mathbb{R}}
\def\S{\mathbb{S}}
\def\Q{\mathbb{Q}}
\def\Rs{\mathcal{R}}
\def\Ss{\mathcal{S}}
\def\Ks{\mathcal{K}}
\def\Qs{\mathcal{Q}}
\def\As{\mathcal{A}}  
\def\Es{\mathcal{E}}  
\def\Bs{\mathcal{B}}  
\def\Us{\mathcal{U}}  
\def\nup{{n_{\mathrm{u}}}}
\def\nx{{n_{\mathrm{x}}}}
\def\xv{\boldsymbol{x}}
\def\vv{\boldsymbol{v}}
\def\xiv{\boldsymbol{\xi}}
\def\av{\boldsymbol{a}}
\def\Am{\boldsymbol{A}}
\def\Bm{\boldsymbol{B}}
\def\I{{\normalfont\textbf{I}}}
\def\O{{\normalfont\textbf{0}}}
\def\Mm{\boldsymbol{M}}
\def\Nm{\boldsymbol{N}}
\def\Qm{\boldsymbol{Q}}
\def\Vm{\boldsymbol{V}}
\def\Km{\boldsymbol{K}}
\def\e{\mathrm{e}}
\newcommand{\fakeparagraphnospace}[1]{\vspace{1mm}\noindent\textbf{#1.}}
\newcommand{\fakeparagraph}[1]{\fakeparagraphnospace{#1}}
\begin{document}
	
	\title{Computing the sampling performance of event-triggered control}
	
	\author{Gabriel de A.~Gleizer}
	\email{g.gleizer@tudelft.nl}
	\affiliation{%
		\institution{TU Delft}
		\streetaddress{Mekelweg}
		\city{Delft}
		\country{The Netherlands}}
	
	\author{Manuel Mazo Jr.}
	\email{m.mazo@tudelft.nl}
	\affiliation{%
		\institution{TU Delft}
		\streetaddress{Mekelweg}
		\city{Delft}
		\country{The Netherlands}}
	
	\begin{abstract}
		
		In the context of networked control systems, event-triggered control (ETC) has emerged as a major topic due to its alleged resource usage reduction capabilities. However, this is mainly supported by numerical simulations, and very little is formally known about the traffic generated by ETC.
		This work devises a method to estimate, and in some cases to determine exactly, the minimum average inter-sample time (MAIST) generated by periodic event-triggered control (PETC) of linear systems. 
		The method involves abstracting the traffic model using a bisimulation refinement algorithm and finding the cycle of minimum average length in the graph associated to it. This always gives a lower bound to the actual MAIST. Moreover, if this cycle turns out to be related to a periodic solution of the closed-loop PETC system, the performance metric is exact.
	\end{abstract}

	\maketitle

	\section{INTRODUCTION}
	%
	Nowadays, most control systems are implemented using networks as the communication medium between sensors, controllers, and actuators. The classic method of performing control through digital media is called periodic \emph{sample-and-hold control}, where sensor data is gathered at a fixed sampling rate, control commands are then immediately updated and sent to actuators, which hold this command for the sampling period time. Choosing the sampling period is based fundamentally on a worst-case analysis across the state-space of the system. Disrupting this periodic paradigm, event-triggered control (ETC) works by sampling only when a significant condition happens, thus adapting the sampling rate to the system state: this gives it the potential to drastically reduce communications in comparison to periodic control. This concept dates back from \cite{aastrom2002comparison}, with the formal methods to design event conditions to achieve desired stability properties presented in \cite{tabuada2007event}. Since then, many authors (e.g.~\cite{wang2008event, girard2015dynamic, heemels2012introduction}) worked on event design to improve sampling performance while guaranteeing stability and control performance properties, or to improve the practical implementation aspects of ETC, such as the periodic event-triggered control (PETC) of \cite{heemels2013periodic}. In PETC, event conditions are checked periodically, but sensor values are only communicated upon the condition satisfaction.
	It is important to note, however, that most of the evidence of superiority of ETC in comparison to periodic control, in what concerns bandwidth usage, is only supported through numerical case studies. In some cases, such as PETC, one can obtain a straightforward \emph{qualitative} assertion of non-inferiority by setting the \emph{event-checking period} equal to the largest periodic sampling time one can obtain. Nonetheless, no \emph{quantitative} measurement of this superiority has been established; e.g., in PETC, doing this strategy often leads to periodic triggering of the events, bringing no benefits at all. 
	
	The present work is concerned precisely with measuring the sampling performance of a given PETC system. More specifically, we aim at computing the \emph{minimum average inter-sample time} (MAIST) of a PETC system: this translates directly to its expected network load or resource utilization. We focus on PETC due to its practical relevance, but also because, as observed in \cite{gleizer2020scalable}, it enables exact traffic abstractions---a central tool for the results of this paper.
	Literature related to this objective can be categorized in two main approaches. 
	The first \cite{postoyan2019interevent} focuses on understanding the qualitative asymptotic trends of the inter-sample times of planar linear systems. The authors conclude that, for small enough triggering condition parameters, the inter-sample times eventually converge to a fixed value or exhibit a periodic pattern. Despite providing very interesting insights, the results are limited to two-dimensional state-spaces, do not provide the quantitative information we are interested in, and, perhaps most importantly, are only valid for small triggering parameters: this way, ETC provides the least benefit. The second category is the use of symbolic abstractions \cite{kolarijani2016formal, gleizer2020scalable}, which follow on the extensive work on partitioning and aggregation for abstractions, see \cite{tabuada2009verification}. These works are concerned with short-term prediction of inter-sample times in order to develop a scheduler that can, e.g., request sensor data before events are triggered; they do not capture long-term properties of the sampling behavior of ETC like the MAIST, which, we argue, provide a more definitive information about the sampling performance. Still in the same category, \cite{gleizer2020towards} has recently given a step towards understanding longer-term traffic patterns of PETC, by proposing the usage of a bisimulation-like algorithm which determines the $m$ next inter-sample times from a given state. This allows a very conservative estimate of the MAIST by taking the minimum average of all such $m$-length sequences.
	
	The present work builds upon the bisimulation-like algorithm of \cite{gleizer2020towards}, which can be seen as a modified $l$-complete abstraction \cite{moor1999supervisory, schmuck2015comparing}, to compute the MAIST of PETC for linear time-invariant systems. The main insight is that computing the MAIST once there is a finite-state simulation (abstraction) of the PETC traffic is easy, as it reduces to finding the cycle of minimum average length of the associated weighted graph \cite{chatterjee2010quantitative}. %
	Then, as we show, if we can \emph{verify} that the minimum cycle in the abstraction exists in the actual closed-loop system, we will have obtained an exact value for the MAIST. This observation gives rise to the concept of \emph{minimum-average-cycle-equivalent simulation}. If the minimum cycle is not exhibited by the PETC system, the value obtained through the abstraction is still a lower bound, and further refinements will eventually break it, providing tighter bounds.
	
	\subsection{Notation}
	
	We denote by $\No$ the set of natural numbers including zero, $\N \coloneqq \No \setminus \{0\}$, $\N_{\leq n} \coloneqq \{1,2,...,n\}$, by $\Q$ the set of rational numbers, and by $\R_+$ the set of non-negative reals. %
	We denote by $|\xv|$ the norm of a vector $\xv \in \R^n$, but if $s$ is a sequence or set, $|s|$ denotes its length or cardinality, respectively. For a square matrix $\Am \in \R^{n \times n},$ we write $\Am \succ \O$ ($\Am \succeq \O$) if $\Am$ is positive definite (semi-definite). The set $\S^n$ denotes the set of symmetric matrices in $\R^n$. 
	For a set $\Xs\subseteq\Omega$, we denote by $\bar{\Xs}$ its complement: $\Omega \setminus \Xs$; 
	We often use a string notation for sequences, e.g., $\sigma = abc$ reads $\sigma(1) = a, \sigma(2) = b, \sigma(3) = c.$ Powers and concatenations work as expected, e.g., $\sigma^2 = \sigma\sigma = abcabc.$ In particular, $\sigma^\omega$ denotes the infinite repetition of $\sigma$.
	For a relation $\Rs \subseteq \Xs_a \times \Xs_b$, its inverse is denoted as $\Rs^{-1} = \{(x_b, x_a) \in \Xs_b \times \Xs_a : (x_a, x_b) \in \Rs\}$. Finally, we denote by $\pi_\Rs(\Xs_a) \coloneqq \{ x_b \in \Xs_b \mid (x_a, x_b) \in \Rs \text{ for some } x_a \in \Xs_b\}$ the natural projection of $\Xs_a$ onto $\Xs_b$.
	
	\section{PROBLEM STATEMENT}
	
	Consider a linear time-invariant plant controlled with sample-and-hold state feedback \cite{astrom2008event} described by
	\begin{align}
		\dot{\xiv}(t) &= \Am\xiv(t) + \Bm\Km\hat{\xiv}(t),\label{eq:plant}
	\end{align}
	where $\xiv(t) \in \R^\nx$ is the plant's state with initial value $\xv_0 \coloneqq \xiv(0)$, $\hat{\xiv}(t) \in \R^\nx$ is the state measurement available to the controller, $\Km\hat{\xiv}(t) \in \R^\nup$ is the control input, $\nx$ and $\nup$ are the state-space and input-space dimensions, respectively, and $\Am, \Bm, \Km$ are matrices of appropriate dimensions. 
	The holding mechanism is zero-order: let $t_i \in \R_+, i \in \N_0$ be a sequence of sampling times, with $t_0 = 0$ and $t_{i+1} - t_i > \varepsilon$ for some $\varepsilon > 0$; then $\hat{\xiv}(t) = \xiv(t_i), \forall t \in [t_i, t_{i+1})$. 
	
	In ETC, a \emph{triggering condition} determines the sequence of times $t_i$. In PETC, this condition is checked only periodically, with a fundamental checking period $h$. Figure \ref{fig:block} shows a simple diagram depicting the ETC scheme. We consider the family of static \emph{quadratic triggering conditions} from \cite{heemels2013periodic} with an additional maximum inter-event time condition below:
	\begin{equation}\label{eq:quadtrig}
		t_{i+1} = \inf\left\{\!kh>t_i, k \in \N ~\middle|
		\!\begin{array}{c}
			\begin{bmatrix}\xiv(kh) \\ \xiv(t_i)\end{bmatrix}\tran
			\!\Qm \begin{bmatrix}\xiv(kh) \\ \xiv(t_i)\end{bmatrix} > 0\! \\
			\text{ or } \ kh-t_i \geq \bar{k}h\phantom{\dot{\hat{I}}}
		\end{array}\!
		\right\}\!,
	\end{equation}
	where $\Qm \in \S^{2\nx}$ is the designed triggering matrix, and $\bar{k}$ is the chosen maximum (discrete) inter-event time.\footnote{Typically, a maximum inter-event time exists naturally for a system with (P)ETC (see \cite{gleizer2018selftriggered}). Still, one may want to set a smaller maximum inter-event time so as to establish a ``heart beat'' of the system. In any case, this is a necessity if one wants to obtain a finite-state abstraction of the system.} 
	Observing this equation, note that the inter-event time $t_{i+1} - t_{i}$ is a function of $\xv_i \coloneqq \xiv(t_i)$; denoting $\kappa \coloneqq (t_{i+1}-t_i)/h$ as the discrete inter-sample time, it follows that
	\begin{gather}
		\kappa(\xv_i) = \min\left\{k \in \{1, 2, ...\bar{k}\} \mid \xv_i\tran\Nm(k)\xv_i > 0 \text{ or } k=\bar{k}\right\}, \nonumber\\
		\Nm(k) \coloneqq \begin{bmatrix}\Mm(k) \\ \I\end{bmatrix}\tran
		\Qm \begin{bmatrix}\Mm(k) \\ \I\end{bmatrix}, \label{eq:petc_time}\\
		\!\!\Mm(k) \coloneqq \Am_\d(k) + \Bm_\d(k)\Km \coloneqq \e^{\Am hk} + \int_0^{hk}\e^{\Am\tau}\d\tau \Bm\Km.\!\!\nonumber
	\end{gather}
	where $\I$ denotes the identity matrix. Thus, the event-driven evolution of sampled states can be compactly described by the recurrence
	\begin{equation}\label{eq:samples}
		\xiv(t_{i+1}) = \Mm(\kappa(\xiv(t_i))\xiv(t_i).
	\end{equation}
	With this, each initial condition $\xv_0 \in \R^\nx$ leads to a sequence of samples $\xv_i$ and inter-sample times $k_i(\xv_0)$ defined recursively as 
	\begin{align*}
		\xv_{i+1} &= \Mm(\kappa(\xv_i))\xv_i \\
		k_i(\xv_0) &\coloneqq \kappa(\xv_i),
	\end{align*}
	for which one may attribute an \emph{average inter-sample time} (AIST):
	$$ \text{AIST}(\xv) \coloneqq \liminf_{n\to\infty}\frac{1}{n+1}\sum_{i=0}^{n}hk_i(\xv). $$
	As usual, we use $\liminf$ instead of $\lim$ to obtain the limit lower bound in case the regular limit does not exist.
	
	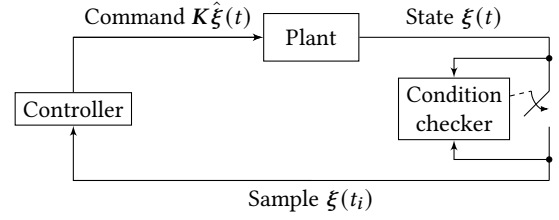
\begin{figure}
		\begin{center}
			\tikzstyle{block} 		= [draw, rectangle, minimum height=2em, minimum width=4em]
\tikzstyle{sum} 		= [draw, circle, inner sep=0, minimum size=0.2cm, node distance=1cm]
\tikzstyle{input} 		= [coordinate]
\tikzstyle{output} 		= [coordinate]
\tikzstyle{split} 		= [coordinate]
\tikzstyle{pinstyle} 	= [pin edge={to-,thin,black}]

\begin{tikzpicture}[auto, node distance=1.5em,>=latex']

\node [block, align=center,] (P) {Plant};


\node [coordinate, below = 5 em of P] (C) {};


\node[coordinate, right = 10 em of C.east](br){};
\node[coordinate, output, left = 10 em of C.west](bl){};
\draw (P.east -| br) node[coordinate] (ur){};
\draw (P.west -| bl) node[coordinate] (ul){};

\draw [-] (P.east) --node[above]{State $\xiv(t)$} (ur);
\draw [-] (br) --node[below]{Sample $\xiv(t_i)$} (bl);
\draw [->] (ul) --node[above]{Command $\Km\hat{\xiv}(t)$} (P.west);

\path (ur) -- coordinate[midway](mr) (br);
\path (ul) -- node[rectangle, midway, anchor=center, draw](ml){Controller} (bl);
\draw [->] (C) -| (ml);
\draw [->] (ml) |- (P);

\node[coordinate, above = 0.75 em of mr] (rbegs){};
\node[coordinate, below = 1.5 em of rbegs](rends){};
\node[coordinate, below left = 1.5 em of rbegs](rups){};
\draw [-] (ur) -- (rbegs);
\draw [-] (rbegs) -- (rups);
\draw [-] (rends) -- (br);
\node[coordinate, left = 0.75 em of rbegs](rarrs){};
\node[coordinate, below = 0.00 em of mr] (rarre){};
\draw [-{Latex[length=0.35em,width=0.25em]}] (rarrs) to [out=-90, in=180, looseness=1] (rarre);

\node[coordinate, above = 0.00 em of ml] (larre){};

\node[coordinate, left = 1. em of mr] (sr) {};
\node[coordinate, right = 1. em of ml] (sl) {};
\node[circle, inner sep = 0, minimum size=2pt, draw, fill=black, below = 0.75 em of ur] (wiretop) {};
\node[circle, inner sep = 0, minimum size=2pt, draw, fill=black, above = 0.75 em of br] (wirebot) {};
\node[left = 4 em of mr, anchor=center, text centered, draw, solid, text width=4em](T){Condition checker} (C);
\draw[->] (wiretop) -| (T);
\draw[->] (wirebot) -| (T);
\draw[dash pattern={on 0.2em off 0.2em}] (T) -- (rarrs);
%
\end{tikzpicture}
			\caption{\label{fig:block} Block diagram of an ETC system.}
		\end{center}
	\end{figure}
	
	The goal of this paper is to devise a method to compute, for a given periodic event-triggered controlled system \eqref{eq:plant}--\eqref{eq:quadtrig}, its \emph{minimum average inter-sample time} (MAIST), which is the minimal AIST across all possible initial conditions:
	\begin{equation}\label{eq:maist}
		\text{MAIST} \coloneqq \inf_{\xv \in \R^\nx}\liminf_{n\to\infty}\frac{1}{n+1}\sum_{i=0}^{n}hk_i(\xv).
	\end{equation}
	
	Calculating or even estimating the quantity above is challenging. How can one choose a sufficiently large $n$, or how can one exhaustively search for states to obtain one that yields the MAIST? This direct approach is unpromising, and thus we propose to find the value of Eq.~\eqref{eq:maist} through finite-state abstractions.
	
	\section{BACKGROUND}
	
	The strategy to solve the problem posed in the previous section is to \emph{abstract} the infinite-state system given by Eqs.~\eqref{eq:plant}--\eqref{eq:quadtrig} to a finite-state system, compute the equivalent to a MAIST in this abstraction, and establish a relation between the quantities of the original system and its abstraction. For that, we introduce the framework of \cite{tabuada2009verification} to formally relate systems of different natures, e.g., those described by differential equations with those described by finite-state machines. Later, we present the notion of quantitative automata from \cite{chatterjee2010quantitative} and how to compute the MAIST of a (priced) automaton.
	
	\subsection{Transition systems and abstractions}
	
	In \cite{tabuada2009verification}, Tabuada gives a generalized notion of transition system:
	\begin{defn}[Transition System \cite{tabuada2009verification}]\label{def:system} 
		A system $\Ss$ is a tuple $(\Xs,\Xs_0,\Es,\Ys,H)$ where:
		\begin{itemize}
			\item $\Xs$ is the set of states,
			\item $\Xs_0 \subseteq \Xs$ is the set of initial states,
			\item $\Es \subseteq \Xs \times \Xs$ is the set of edges (or transitions),
			\item $\Ys$ is the set of outputs, and
			\item $H: \Xs \to \Ys$ is the output map.
		\end{itemize}
	\end{defn}
	Here we have omitted the action set $\Us$ from the original definition because we focus on autonomous systems. A system is said to be finite (infinite) state when the cardinality of $\Xs$ is finite (infinite). A transition in $\Es$ is denoted by a pair $(x, x')$. We define $\Post_\Ss(x) \coloneqq \{x'\mid (x,x') \in \Es\}$ as the set of states that can be reached from $x$ in one step. System $\Ss$ is said to be \emph{non-blocking} if $\forall x \in \Xs, \Post_\Ss(x) \neq \emptyset.$ 
	We call $x_0x_1x_2...$ an \emph{infinite internal behavior}, or \emph{run} of $\Ss$ if $x_0 \in \Xs_0$ and $(x_i,x_{i+1}) \in \Es$ for all $i \in \N$, and $y_0y_1...$ its corresponding \emph{infinite external behavior}, or \emph{trace}, if $H(x_i) = y_i$ for all $i \in \N$. We denote by $B_{\Ss}(r)$ the external behavior from a run $r = x_0x_1...$ (in the case above, $B_{\Ss}(r) = y_0y_1...$), by $\Bs^\omega_x(\Ss)$ the set of all infinite external behaviors of $\Ss$ starting from state $x$, and by $\Bs^\omega(\Ss) \coloneqq \bigcup_{x\in\Xs_0}\Bs^\omega_x(\Ss)$ the set of all infinite external behaviors of $\Ss$.
	
	The concepts of simulation and bisimulation are fundamental to establish formal relations between two transition systems.
	
	\begin{defn}[Simulation Relation \cite{tabuada2009verification}]\label{def:sim}
		Consider two systems $\Ss_a$ and $\Ss_b$ with $\Ys_a$ = $\Ys_b$. A relation $\Rs \subseteq \Xs_a \times \Xs_b$ is a simulation relation from $\Ss_a$ to $\Ss_b$ if the following conditions are satisfied:
		\begin{enumerate}
			\item[i)] for every $x_{a0} \in \Xs_{a0}$, there exists $x_{b0} \in \Xs_{b0}$ with $(x_{a0}, x_{b0}) \in \Rs;$
			\item[ii)] for every $(x_a, x_b) \in \Rs, H_a(x_a) = H_b(x_b);$
			\item[iii)] for every $(x_a, x_b) \in \Rs,$ we have that $(x_a, x_a') \in \Es_a$ implies the existence of $(x_b, x_b') \in \Es_b$ satisfying $(x_a', x_b') \in \Rs.$
		\end{enumerate}
	\end{defn}
	We say $\Ss_a \preceq \Ss_b$ when $\Ss_b$ simulates $\Ss_a$, which is true if there exists a simulation relation from $\Ss_a$ to $\Ss_b$. 
	\begin{defn}[Bisimulation \cite{tabuada2009verification}]
		Consider two systems $\Ss_a$ and $\Ss_b$ with $\Ys_a$ = $\Ys_b$.
		$\Ss_a$ is said to be bisimilar to $\Ss_b$, denoted $\Ss_a \bisim \Ss_b$, if there exists a relation
		$\Rs$ such that:
		\begin{itemize}
			\item $\Rs$ is a simulation relation from $\Ss_a$ to $\Ss_b$;
			\item $\Rs^{-1}$ is a simulation relation from $\Ss_b$ to $\Ss_a$.
		\end{itemize}
	\end{defn}
	Weaker but relevant relations associated with simulation and bisimulation are, respectively, \emph{behavioral inclusion} and \emph{behavioral equivalence}:
	\begin{defn}[Behavioral inclusion and equivalence \cite{tabuada2009verification}]
		Consider two systems $\Ss_a$ and $\Ss_b$ with $\Ys_a$ = $\Ys_b$. We say that $\Ss_a$ is \emph{behaviorally included} in $\Ss_b$, denoted by $\Ss_a \preceq_\Bs \Ss_b$, if $\Bs^\omega(\Ss_a) \subseteq \Bs^\omega(\Ss_b).$ In case $\Bs^\omega(\Ss_a) = \Bs^\omega(\Ss_b),$ we say that $\Ss_a$ and $\Ss_b$ are \emph{behaviorally equivalent}, which is denoted by $\Ss_a \bisim_\Bs \Ss_b$.
	\end{defn}
	(Bi)simulations lead to behavioral inclusion (equivalence):
	\begin{thm}[\cite{tabuada2009verification}]
		Given two systems $\Ss_a$ and $\Ss_b$ with $\Ys_a$ = $\Ys_b$:
		\begin{itemize}
			\item $\Ss_a \preceq \Ss_b \implies \Ss_a \preceq_\Bs \Ss_b$;
			\item $\Ss_a \bisim \Ss_b \implies \Ss_a \bisim_\Bs \Ss_b$.
		\end{itemize}
	\end{thm}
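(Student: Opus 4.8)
The plan is to prove the simulation implication directly by transferring runs, and then obtain the bisimulation implication as an immediate corollary. For the first implication I would fix a simulation relation $\Rs$ from $\Ss_a$ to $\Ss_b$ and take an arbitrary trace $y_0 y_1 y_2 \ldots \in \Bs^\omega(\Ss_a)$; the goal is to exhibit a run of $\Ss_b$ whose external behavior is exactly this trace, which places it in $\Bs^\omega(\Ss_b)$ and hence establishes $\Bs^\omega(\Ss_a) \subseteq \Bs^\omega(\Ss_b)$, i.e.\ $\Ss_a \preceq_\Bs \Ss_b$.

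First I would unpack the trace: by definition it arises from some run $x_{a0} x_{a1} x_{a2} \ldots$ of $\Ss_a$ with $x_{a0} \in \Xs_{a0}$, $(x_{ai}, x_{a,i+1}) \in \Es_a$, and $H_a(x_{ai}) = y_i$ for all $i$. I would then build a matching run of $\Ss_b$ by induction on the index. The base case uses condition (i) of the simulation relation to select $x_{b0} \in \Xs_{b0}$ with $(x_{a0}, x_{b0}) \in \Rs$. For the inductive step, assuming the pair $(x_{ai}, x_{bi})$ lies in $\Rs$, I would apply condition (iii) to the transition $(x_{ai}, x_{a,i+1}) \in \Es_a$ to obtain a state $x_{b,i+1}$ with $(x_{bi}, x_{b,i+1}) \in \Es_b$ and $(x_{a,i+1}, x_{b,i+1}) \in \Rs$. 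This yields a run $x_{b0} x_{b1} x_{b2} \ldots$ of $\Ss_b$ along which every pair $(x_{ai}, x_{bi})$ is in $\Rs$; condition (ii) then forces $H_b(x_{bi}) = H_a(x_{ai}) = y_i$ for all $i$, so the trace of this run is precisely $y_0 y_1 y_2 \ldots$, as required.

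The bisimulation case follows with no additional machinery: if $\Rs$ witnesses $\Ss_a \bisim \Ss_b$, then $\Rs$ is a simulation from $\Ss_a$ to $\Ss_b$ and $\Rs^{-1}$ is a simulation from $\Ss_b$ to $\Ss_a$, so applying the already-proved implication in both directions gives $\Bs^\omega(\Ss_a) \subseteq \Bs^\omega(\Ss_b)$ and $\Bs^\omega(\Ss_b) \subseteq \Bs^\omega(\Ss_a)$, hence $\Bs^\omega(\Ss_a) = \Bs^\omega(\Ss_b)$, which is exactly $\Ss_a \bisim_\Bs \Ss_b$. The one point demanding care is that the inductive step guarantees only the \emph{existence} of a suitable successor $x_{b,i+1}$ rather than a canonical one, so assembling the infinite run $x_{b0} x_{b1} \ldots$ implicitly relies on a dependent-choice argument to thread the individual selections into a single sequence. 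I expect this to be the only genuinely delicate point, and it is routine once made explicit; everything else reduces to mechanically chaining the three defining conditions of a simulation relation.
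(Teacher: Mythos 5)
Your proof is correct and is exactly the standard argument for this result, which the paper itself does not prove but cites from \cite{tabuada2009verification}: inductively lift a run of $\Ss_a$ to a matching run of $\Ss_b$ using conditions (i) and (iii), conclude equality of traces from condition (ii), and obtain the bisimulation case by applying the simulation case to both $\Rs$ and $\Rs^{-1}$. Your remark about dependent choice being needed to assemble the infinite run from the stepwise existence claims is a fair (and routinely suppressed) technical point; nothing further is required.
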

	
	\subsection{Quantitative automata}
	
	While much of the field of formal methods in control is concerned with qualitative analyses, such as safety, stability, and reachability, often quantitative computations are of interest, like the problem we set ourselves to solve in this paper. In \cite{chatterjee2010quantitative}, Chatterjee et al.~established a comprehensive framework for quantitative problems on finite-state systems, from which we borrow some definitions and results, trying to be consistent with the notation from the previous section as much as possible.
	
	\begin{defn}[Weighted automaton (adapted from \cite{chatterjee2010quantitative})]
		A weighted automaton $\Ss$ is the tuple $(\Xs,\Xs_0,\Es,\Ys,H, \gamma)$, where
		\begin{itemize}
			\item $(\Xs,\Xs_0,\Es,\Ys,H)$ is a \emph{non-blocking} transition system;
			\item $\gamma: \Es \to \Q$ is the \emph{weight function}.
		\end{itemize}
	\end{defn}
	The adaptation we have made is that we include outputs to comply with previously introduced notation; again, we ignore the action set as we are interested in autonomous systems. For a given run $r = x_0x_1...$ of $\Ss$, $\gamma(r) = v_0v_1...$ is the sequence of weights defined by $v_i = \gamma(x_i,x_{i+1})$.
	
	A \emph{value function} $\text{Val} : \Q^\omega \to \R$ attributes a value to an infinite sequence of weights $v_0v_1... \in \Q^\omega$. Among the well-studied value functions, the one we are interested in is
	$$ \LimAvg(v) \coloneqq \liminf_{n\to\infty}\frac{1}{n+1}\sum_{i=0}^{n}v_i.$$
	A \emph{LimAvg-automaton} is a weighted automaton equipped with the LimAvg value function. We define the value of a LimAvg-automaton as $V(\Ss) \coloneqq \inf\{\LimAvg(\gamma(r)) \mid r \text{ is a run of } \Ss\}.$\footnote{In \cite{chatterjee2010quantitative}, $\sup$ is used instead of $\inf$ because they consider it a more natural choice in general quantitative decision problems, considering the convention used in qualitative decisions (e.g., acceptance). As it is remarked in \cite{chatterjee2010quantitative} itself, using $\inf$ is also a valid choice, and it is the most natural for our problem.} The following result is essentially an excerpt from Theorem 3 in \cite{chatterjee2010quantitative}, which uses the classical result from Karp \cite{karp1978characterization}:
	\begin{thm}\label{thm:limavg}
		Given a finite-state LimAvg-automaton $\Ss$ with $|\Xs| = n$ and $|\Es| = m, V(\Ss)$ can be computed in $\bigO(nm)$. Moreover, system $\Ss$ admits a cycle $x_0x_1...x_k$ satisfying $x_i \to x_{i+1}, i < k,$ and $x_k \to x_0$ s.t. $\LimAvg(\gamma((x_0x_1...x_k)^\omega)) = V(\Ss).$
	\end{thm}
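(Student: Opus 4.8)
The plan is to prove this by reducing the problem to a classical result about the minimum mean cycle in a finite weighted graph, which is exactly what Karp's algorithm \cite{karp1978characterization} computes. The statement has two parts: a complexity claim ($V(\Ss)$ is computable in $\bigO(nm)$) and a structural claim (the infimum is attained by some cycle). Both follow from the standard minimum-mean-cycle theory once the right correspondence between runs of $\Ss$ and walks in the underlying weighted graph is set up.

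First I would make explicit the graph-theoretic reformulation. Associate to $\Ss = (\Xs, \Xs_0, \Es, \Ys, H, \gamma)$ the finite directed graph $G$ with vertex set $\Xs$, edge set $\Es$, and edge weights given by $\gamma$. Since $\Ss$ is non-blocking and finite-state, every vertex has at least one outgoing edge, so every finite walk extends to an infinite run and, conversely, every infinite run is an infinite walk in $G$. The quantity $\LimAvg(\gamma(r))$ is the limit-inferior of the running average of edge weights along the walk $r$, and $V(\Ss)$ is its infimum over all infinite runs starting in $\Xs_0$. The core of Karp's result is that in a finite graph this infimum equals the minimum \emph{mean weight} over all directed cycles reachable from the initial vertices, i.e. the minimum over cycles $C$ of $\frac{1}{|C|}\sum_{e \in C}\gamma(e)$.

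Next I would argue both inequalities relating $V(\Ss)$ to the minimum cycle mean $\mu$. For the bound $V(\Ss) \le \mu$: take a cycle $C = x_0x_1\cdots x_k$ achieving $\mu$ together with a finite path from some initial state to $x_0$ (which exists because we only consider reachable cycles), then follow the path once and traverse $C$ forever; the prefix contributes nothing to the LimAvg, so the value of this run equals the mean of $C$, namely $\mu$. This simultaneously proves the ``moreover'' part, since the witnessing run is eventually periodic with period the optimal cycle and its LimAvg equals $V(\Ss)$. For the reverse bound $V(\Ss) \ge \mu$: along any infinite run, since $\Xs$ is finite some state repeats infinitely often, so the run decomposes into a prefix plus a sequence of cycles; a standard averaging argument shows the running average is bounded below by the minimum cycle mean $\mu$, hence its $\liminf$ is at least $\mu$. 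Together these give $V(\Ss) = \mu$. The $\bigO(nm)$ complexity then follows directly by invoking Karp's dynamic-programming characterization of $\mu$, which is computed via shortest-path-style recursions over at most $n$ steps on $m$ edges.

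The main obstacle is the reverse inequality and the attainment claim handled with care: I must ensure the averaging argument over the cycle decomposition genuinely lower-bounds the $\liminf$ (not just the $\limsup$), and that the decomposition of an arbitrary infinite walk into simple cycles plus a bounded prefix is made rigorous despite cycles potentially having different lengths and the averaging weights being nonuniform. The cleanest route is to appeal directly to Karp's theorem as a black box for the equality $V(\Ss) = \mu$ and for the $\bigO(nm)$ bound, and to supply the short explicit construction of the optimal periodic run only for the ``moreover'' statement; this keeps the proof self-contained where it adds value while deferring the delicate limit-inferior estimate to the cited classical result.
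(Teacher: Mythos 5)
The paper offers no proof of this theorem at all---it is imported, as the surrounding text says, from Theorem 3 of Chatterjee et al., which in turn rests on Karp's minimum-mean-cycle theorem---so your proposal is being compared against a bare citation. Your outline is the standard argument underlying those references and is correct: the reduction to minimum cycle mean $\mu$ in the weighted digraph, the explicit prefix-plus-periodic-cycle run giving both $V(\Ss)\le\mu$ and the ``moreover'' attainment claim (with the necessary restriction to cycles reachable from $\Xs_0$), and the deferral of the delicate $\liminf$ lower bound $V(\Ss)\ge\mu$ and the $\bigO(nm)$ complexity to Karp's result, which is exactly what the paper itself does implicitly.
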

	The cycle mentioned above is a \emph{minimum average cycle} of the weighted digraph defined by $\Ss$, and can be recovered in $\bigO(n)$ using the algorithm in \cite{chaturvedi2017note}.
	
	\section{COMPUTING THE MAIST}
	
	From Theorem \ref{thm:limavg}, we have an indication that it would be relatively straightforward to compute the minimum average inter-sample time of the PETC system \eqref{eq:plant}--\eqref{eq:quadtrig} if we could represent it as a weighted automaton. Let us investigate how we can do this. 	We start by describing the evolution of samples of a PETC system, cf.~Eq.~\eqref{eq:samples}, as a generalized transition system following Def.~\ref{def:system}:
	\begin{equation}\label{eq:S}
		\begin{aligned}
			\Ss = (\R^n&, \R^n, \Es, \Ys, H), \text{ where} \\ 
			\Es & = \{(\xv,\xv') \in \R^n \times \R^n \mid \xv' = \Mm(\kappa(\xv))\xv\} \\
			\Ys &= \{1,2,...,\bar{k}\} \\ 
			H &= \kappa.
		\end{aligned}
	\end{equation}
	The first feature that we see by inspecting Eqs.~\eqref{eq:S} and \eqref{eq:maist} in view of the definition of a LimAvg-automaton is that the weight of a transition is in fact $h$ times the output of its outbound state. Hence, for any run $r$ of $\Ss$, it holds that $\gamma(r) = h \cdot B_\Ss(r)$; that is, we can characterize weight sequences, hence run values, exclusively by external behaviors. Considering the notion of behavioral inclusion, this gives a straightforward result: 
	\begin{prop}\label{prop:bound}
		Consider two systems $\Ss_a$ and $\Ss_b$ with $\Ys_a = \Ys_b \subset \Q$. Attribute to each system the weight function $\gamma_s(x_s, x'_s)$ $\equiv$ $ H_s(x_s)$, where $s\!\in\! \{a, b\}.$ If $\Ss_a \preceq_\Bs\!\!(\bisim_\Bs)\, \Ss_b$, then $V(\Ss_a) \,\geq(=)\, V(\Ss_b).$ 
	\end{prop}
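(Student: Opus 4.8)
The plan is to exploit the special structure observed just before the proposition: because the weight function depends only on the outbound state's output, $\gamma_s(x_s,x'_s) = H_s(x_s)$, the weight sequence along any run is entirely determined by its external behavior (trace). Concretely, for a run $r = x_0x_1\ldots$ of $\Ss_s$, we have $\gamma_s(r) = H_s(x_0)H_s(x_1)\ldots = B_{\Ss_s}(r)$, so that $\LimAvg(\gamma_s(r)) = \LimAvg(B_{\Ss_s}(r))$. The key consequence is that $V(\Ss_s) = \inf\{\LimAvg(\gamma_s(r)) \mid r \text{ a run of } \Ss_s\}$ can be rewritten purely in terms of the set of infinite external behaviors:
\begin{equation*}
	V(\Ss_s) = \inf_{y \in \Bs^\omega(\Ss_s)} \LimAvg(y).
\end{equation*}
First I would establish this identity carefully, noting that every trace $y \in \Bs^\omega(\Ss_s)$ arises from at least one run (by definition of $\Bs^\omega$) and conversely every run produces a trace, so the infimum over runs equals the infimum over traces.

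Once $V$ is expressed as an infimum of a fixed functional ($\LimAvg$) over the behavior set, the proposition reduces to an elementary fact about infima: if $A \subseteq B$ are subsets of $\Q^\omega$, then $\inf_{y \in A}\LimAvg(y) \geq \inf_{y \in B}\LimAvg(y)$, since the infimum over a smaller set is no smaller. Applying this with $A = \Bs^\omega(\Ss_a)$ and $B = \Bs^\omega(\Ss_b)$: the hypothesis $\Ss_a \preceq_\Bs \Ss_b$ means exactly $\Bs^\omega(\Ss_a) \subseteq \Bs^\omega(\Ss_b)$, hence $V(\Ss_a) \geq V(\Ss_b)$. For the equivalence case, $\Ss_a \bisim_\Bs \Ss_b$ gives $\Bs^\omega(\Ss_a) = \Bs^\omega(\Ss_b)$, and infima over equal sets coincide, yielding $V(\Ss_a) = V(\Ss_b)$.

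I would make sure to justify that the hypothesis $\Ys_a = \Ys_b \subset \Q$ is what guarantees the weight functions are well-defined into $\Q$ (so that both systems are legitimate weighted automata and $V$ is defined for each), and that the two value functions are literally the same $\LimAvg$ map, so comparing their infima is meaningful. The main subtlety to watch for is the direction of the inequality: because $\LimAvg$ uses $\liminf$ and we take an $\inf$ over behaviors, it is tempting to lose track of which way set inclusion pushes the value. The resolution is clean, though — the functional $\LimAvg$ is held fixed and only the domain of the infimum changes, so no delicate interchange of limits is required; the entire argument rests on the monotonicity of $\inf$ under set inclusion. This is why the reduction in the first step, converting run-values to a behavior-indexed infimum, is the crux: after it, everything is immediate.
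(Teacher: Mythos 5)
Your proposal is correct and follows essentially the same route as the paper's proof: rewrite $V(\Ss_s)$ as an infimum of $\LimAvg$ over the behavior set $\Bs^\omega(\Ss_s)$ (using that the weights along a run are exactly its trace), then conclude by monotonicity of the infimum under the inclusion $\Bs^\omega(\Ss_a) \subseteq (=)\ \Bs^\omega(\Ss_b)$. Your version simply spells out the run-to-trace correspondence and the role of $\Ys_a = \Ys_b \subset \Q$ in slightly more detail than the paper does.
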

	\begin{proof} By definition, $V(\Ss_s) = \inf\{\LimAvg(\gamma_s(r)) \mid r $ is a run of $\Ss_s\} = \inf\{\LimAvg(y) \mid y \in \Bs^\omega(\Ss_s)\}.$ Since $\Bs^\omega(\Ss_a) \subseteq\!\!(=)$ $\Bs^\omega(\Ss_b)$, the desired result follows.
	\end{proof}
	Proposition \ref{prop:bound} hints that obtaining a finite-state (bi)simulation of Eq.~\ref{eq:S} provides means to compute a lower bound (or the actual value) for the MAIST. This is promising, because works in \cite{gleizer2020scalable, gleizer2020towards} provides methods to find simulations of a PETC traffic. However, on the one hand, a simulation alone does not provide how conservative the lower bound may be; on the other hand, a finite-state bisimulation of an infinite system is often impossible to be obtained. In fact, bisimulation is too strong, in the sense that all behaviors and their fragments are exactly captured. As hinted by Theorem \ref{thm:limavg}, the LimAvg value is determined by a minimum average cycle of the system. If one such cycle happens to have a correspondence with the concrete system, this is sufficient to obtain the exact value for the MAIST. For the rest of this paper, we assume every transition system is equipped with the weight function equal to the outbound state output, i.e., $\gamma(x, x') \equiv  H(x)$. Let us proceed with formalities.
	\begin{defn}[Minimum-average-cycle-equivalent simulation] \label{def:macesim} Consider two transition systems $\Ss_a$ and $\Ss_b$ satisfying $\Ss_a \preceq \Ss_b$. Denote by $\MAC(\Ss_b)$ the set of minimum average cycles of $\Ss_b$. If $dc^\omega \in \Bs^\omega(\Ss_a)$ for some finite-length sequence $d$ and some $c \in \MAC(\Ss_b)$, then $\Ss_b$ is a minimum-average-cycle-equivalent (MACE) simulation of $\Ss_a$.
	\end{defn}
	A MACE simulation is a standard simulation with the additional requirement that one of the minimum average cycles of the abstraction must be observed on the concrete system, after possibly a finite number of transitions from the initial state. It should be clear that MACE simulation is stronger than simulation, 
	but it is significantly weaker than bisimulation. The following result is a straightforward conclusion from Proposition \ref{prop:bound} and Theorem \ref{thm:limavg}.
	\begin{prop}
		Let $\Ss_b$ be a finite-state MACE simulation of $\Ss_a$; then, $V(\Ss_a) = V(\Ss_b)$.
	\end{prop}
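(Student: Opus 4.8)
The plan is to establish the two inequalities $V(\Ss_a) \geq V(\Ss_b)$ and $V(\Ss_a) \leq V(\Ss_b)$ separately. The first is essentially free: since $\Ss_a \preceq \Ss_b$, the fact that simulation implies behavioral inclusion yields $\Ss_a \preceq_\Bs \Ss_b$, and then Proposition~\ref{prop:bound} (with the standing convention $\gamma(x,x') \equiv H(x)$) gives directly $V(\Ss_a) \geq V(\Ss_b)$. So the only real work is the reverse inequality, which is exactly what the additional MACE hypothesis is designed to supply.

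For $V(\Ss_a) \leq V(\Ss_b)$, I would exploit the witness behavior provided by Definition~\ref{def:macesim}: there exist a finite word $d$ and a cycle $c \in \MAC(\Ss_b)$ with $dc^\omega \in \Bs^\omega(\Ss_a)$. The first step is to recall, as observed just before Proposition~\ref{prop:bound}, that under the convention $\gamma \equiv H$ the weight sequence of any run coincides with its external behavior; hence $V(\Ss_a) = \inf\{\LimAvg(y) \mid y \in \Bs^\omega(\Ss_a)\}$, and in particular $V(\Ss_a) \leq \LimAvg(dc^\omega)$ because $dc^\omega$ is an admissible external behavior of $\Ss_a$.

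The second step is to evaluate $\LimAvg(dc^\omega)$. Since $d$ has finite length, its contribution to the running average vanishes in the limit, so $\LimAvg(dc^\omega) = \LimAvg(c^\omega)$; this is the one routine estimate I would spell out, bounding the partial sum over the prefix by a constant and dividing by $n \to \infty$. Finally, because $c$ is a minimum average cycle of $\Ss_b$, Theorem~\ref{thm:limavg} gives $\LimAvg(c^\omega) = V(\Ss_b)$. Chaining these, $V(\Ss_a) \leq \LimAvg(dc^\omega) = \LimAvg(c^\omega) = V(\Ss_b)$, and combined with the first inequality this yields $V(\Ss_a) = V(\Ss_b)$.

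The main (and only mildly delicate) obstacle is making the objects line up correctly: $c$ is described as an element of $\MAC(\Ss_b)$, i.e. a cyclic external behavior, whereas Theorem~\ref{thm:limavg} phrases a minimum average cycle as a sequence of states with a given weight value. The identification of the two is precisely what the $\gamma \equiv H$ convention buys, since the cycle's weight sequence and its output (trace) sequence agree. Once this correspondence is stated explicitly, the argument is short, and the prefix-averaging claim is the only genuine computation.
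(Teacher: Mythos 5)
Your proposal is correct and follows essentially the same route as the paper's own proof: the lower bound $V(\Ss_a)\geq V(\Ss_b)$ comes from Proposition~\ref{prop:bound} via behavioral inclusion, and the upper bound comes from evaluating $\LimAvg(dc^\omega)=\LimAvg(c^\omega)=V(\Ss_b)$ for the witness behavior $dc^\omega\in\Bs^\omega(\Ss_a)$ supplied by Definition~\ref{def:macesim} together with Theorem~\ref{thm:limavg}. Your extra remark about identifying the state-cycle of Theorem~\ref{thm:limavg} with its output/weight sequence under the convention $\gamma\equiv H$ is a useful clarification but does not change the argument.
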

	\begin{proof}
		From Def.~\ref{def:macesim}, take $dc^\omega \in \Bs^\omega(\Ss_a)$ for some finite-length sequence $d$ with $c = k_0k_1...k_N \in \MAC(\Ss_b)$. The associated $\LimAvg$ is
		$$ v \coloneqq \LimAvg(dc^\omega) = \LimAvg(c^\omega) = \frac{1}{N}\sum_1^Nk_i. $$
		From Theorem \ref{thm:limavg}, $v = V(\Ss_b)$. As $v$ is also the value of a behavior from $\Ss_a$, it holds that $v \geq V(\Ss_a)$. Since Prop.~\ref{prop:bound} gives that $V(\Ss_a) \geq V(\Ss_b)$, we have that
		$$ V(\Ss_b) = v \geq V(\Ss_a) \geq V(\Ss_b), $$
		and thus $V(\Ss_b) = V(\Ss_a)$.
	\end{proof}
	
	\subsection{MACE simulation of PETC traffic}
	
	The challenge now resides on obtaining a MACE simulation of the system \eqref{eq:S}. For this we need to be able to (i) build a finite-state simulation of the system; (ii) check if its minimum mean cycle exists in the actual system; if not, (iii) refine the simulation until the cycle breaks; and (iv) repeat the process. This method is essentially the same as the bisimulation algorithm from a quotient model, presented in \cite{tabuada2009verification}, which was used for PETC in \cite{gleizer2020towards}, but with a different stopping criterion. Therefore, let us recover the simulation relation in \cite{gleizer2020towards}, with a simplification that suits our purpose:
	\begin{defn}[Inter-sample sequence relation (adapted from \cite{gleizer2020towards})]\label{def:bisimrel} Given a sequence length $l$, we denote by $\Rs_l \subseteq \Xs \times \Ys^l$ the relation satisfying 
		$(\xv,\dummy{k_1k_2...k_l}) \in \Rs_l$ if and only if
		\begin{subequations}\label{eq:sequence}
			\begin{align}
				\xv &\in \Qs_{k_1}, \label{eq:sequence_k1}\\
				\Mm(k_1)\xv &\in \Qs_{k_2}, \label{eq:sequence_k2}\\
				\Mm(k_2)\Mm(k_1)\xv &\in \Qs_{k_3}, \label{eq:sequence_k3}\\
				& \vdots \nonumber\\
				\Mm(k_{l-1})...\Mm(k_1)\xv &\in \Qs_{k_l}, \label{eq:sequence_kl-1}
			\end{align}
		\end{subequations}
		where 
		\begin{equation}\label{eq:setq}
			\begin{gathered}
				\Qs_k \coloneqq \Ks_k \setminus \left(\bigcap_{j=\underline{k}}^{k-1} \Ks_{j}\right) = \Ks_k \cap \bigcap_{j=1}^{k-1} \bar{\Ks}_{j}, \\
				\Ks_k \coloneqq \begin{cases}
					\{\xv \in \Xs| \xv\tran\Nm(k)\xv > 0\}, & k < \bar{k}, \\
					\R^\nx, & k = \bar{k}.
				\end{cases}
			\end{gathered}
		\end{equation}
	\end{defn}
	
	Eq.~\eqref{eq:setq}, taken from \cite{gleizer2020scalable}, defines the sets $\Qs_k$, containing the states that trigger exactly with inter-sample time $hk$. Eq.~\eqref{eq:sequence} simply states that a state $\xv \in \R^n$ is related to a state $k_1k_2...k_l$ of the abstraction if the inter-sample time sequence that it generates for the next $l$ samples is $hk_1,hk_2,...,kh_l$. The simplification with respect to \cite{gleizer2020towards} is that, here, we are not concerned with the state reaching a ball around the origin, but only with the sequence of sampling times it generates.
	
	\begin{rem} Setting $l=1$ gives a quotient state set of Eq.~\ref{eq:S}, while larger values of $l$ can be seen as refinements using the bisimulation algorithm of \cite{tabuada2009verification}. This relation can also be seen as a method to construct the strongest $l$-complete approximation \cite{moor1999supervisory} of the system \eqref{eq:S}. As a consequence of \cite[Corollary 2]{schmuck2015comparing}, for autonomous deterministic systems, both methods lead to the same abstraction. Hence, here we can use both terms interchangeably, but for purposes of exposition we will use the term \emph{$l$-complete} in what follows.
	\end{rem}
	\begin{defn}\label{def:lsim} Given an integer $l \geq 1$, the \emph{$l$-complete PETC traffic model} is the system $\Ss_l \coloneqq \left(\Xs_l, \Xs_l, \Es_l, \Ys, H_l\right)$, with 
	\begin{itemize}
		\item $\Xs_l \coloneqq \pi_{\Rs_l}(\Xs)$,
		\item $\Es_l = \{(k\sigma, \sigma k') \mid k,k' \in \Ys, \sigma \in \Ys^{l-1}, k\sigma, \sigma k' \in \Xs_l\},$
		\item $H_l(k_1k_2...k_m) = k_1.$
	\end{itemize}
	\end{defn}
	The model above partitions the state-space $\R^\nx$ of the PETC into subsets associated with the next $l$ inter-sample times these states generate. Computing the state set requires solving the quadratic inequality satisfaction problems of Eq.~\eqref{eq:sequence}, which can be done exactly using a satisfiability-modulo-theories (SMT) solver\footnote{For that, the variable is $\xv \in \R^\nx$ and the query is $\exists \xv: $Eq.~\eqref{eq:sequence} holds.} such as Z3 \cite{demoura2008z3}, or approximately through convex relaxations as proposed in \cite{gleizer2020scalable}.\footnote{Using relaxations implies finding inter-sample sequences that may not be exhibited by the real system. This still generates a simulation relation, but not the strongest $l$-complete approximation.} The output map is the next sample alone. The transition relation is simply what is called in \cite{schmuck2015comparing} the domino rule: a state associated with a sequence $k_1k_2...k_l$ must naturally lead to a state whose next first $l-1$ samples are $k_2k_3...k_l$, because the system is deterministic, autonomous, and time-invariant. Hence, any state in $\Xs_l$ that starts with $k_2k_3...k_l$ is a possible successor of $k_1k_2...k_l$. An example for $l=1,2,3$ is depicted in Fig.~\ref{fig:Sl}.
	
	The following fact is a direct consequence of Theorems 6 and 7 from \cite{schmuck2015comparing}, and gives the desired simulation refinement properties:
	\begin{prop}
		Consider the system $\Ss$ from Eq.~\eqref{eq:S} and $\Ss_l$ from Definition \ref{def:lsim}, for some $l \geq 1$. Then, $\Ss \preceq \Ss_{l+1} \preceq \Ss_l.$
	\end{prop}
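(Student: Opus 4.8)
The plan is to establish the two simulation relations $\Ss \preceq \Ss_{l+1}$ and $\Ss_{l+1} \preceq \Ss_l$ separately, in each case by exhibiting an explicit simulation relation and verifying the three conditions of Definition \ref{def:sim}. Since the proposition is stated as a direct consequence of external results (Theorems 6 and 7 of \cite{schmuck2015comparing}), the cleanest route is to translate those $l$-complete abstraction refinement results into the present notation; but a self-contained argument is also within reach using the structure of $\Rs_l$ and the domino-rule transition relation.

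First I would prove $\Ss \preceq \Ss_l$ for every $l$, using the relation $\Rs_l \subseteq \Xs \times \Xs_l$ from Definition \ref{def:bisimrel} directly as the candidate simulation relation. Condition (i) holds because every $\xv \in \Xs$ generates \emph{some} inter-sample sequence of length $l$ (the map $\kappa$ is total, bounded by $\kmax$), so $\pi_{\Rs_l}(\{\xv\})$ is nonempty and the corresponding sequence lies in $\Xs_l$. Condition (ii) is immediate: if $(\xv, k_1\cdots k_l) \in \Rs_l$ then by \eqref{eq:sequence_k1} we have $\xv \in \Qs_{k_1}$, so $H(\xv) = \kappa(\xv) = k_1 = H_l(k_1\cdots k_l)$. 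Condition (iii) is the crux: given $(\xv, k_1\cdots k_l) \in \Rs_l$ and the unique concrete successor $\xv' = \Mm(\kappa(\xv))\xv = \Mm(k_1)\xv$, I must find an abstract successor. The natural candidate is $k_2\cdots k_l k'$ where $k' = \kappa(\Mm(k_l)\cdots\Mm(k_1)\xv)$ is the $(l{+}1)$-th inter-sample time of $\xv$; the defining membership conditions \eqref{eq:sequence} for $\xv'$ shift those of $\xv$ by one index, so $(\xv', k_2\cdots k_l k') \in \Rs_l$, and $(k_1\cdots k_l, k_2\cdots k_l k') \in \Es_l$ by the domino rule. This simultaneously handles $\Ss \preceq \Ss_{l+1}$ by the same argument with $l$ replaced by $l+1$.

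For the refinement $\Ss_{l+1} \preceq \Ss_l$, the candidate relation is the truncation map $\Rs \coloneqq \{(k_1\cdots k_{l+1},\, k_1\cdots k_l) : k_1\cdots k_{l+1} \in \Xs_{l+1}\}$, i.e.\ dropping the last symbol. I would verify that truncation is well-defined into $\Xs_l$: if a length-$(l{+}1)$ sequence is realized by some concrete state, its length-$l$ prefix is realized by the same state, hence lies in $\Xs_l = \pi_{\Rs_l}(\Xs)$. Conditions (i) and (ii) are routine since the first symbol is preserved under truncation. For condition (iii), a transition $k_1\cdots k_{l+1} \to k_2\cdots k_{l+1}k'$ in $\Es_{l+1}$ must be matched by $k_1\cdots k_l \to k_2\cdots k_{l+1}$ in $\Es_l$, which holds by the domino rule on $\Ss_l$, and the two targets are related by truncation. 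The main obstacle in this step is ensuring the matched successor actually lies in $\Xs_l$, which again reduces to the projection fact that any realizable longer sequence has a realizable prefix. Finally, I would note that composing $\Ss \preceq \Ss_{l+1}$ (shown above) with transitivity of $\preceq$ yields the full chain $\Ss \preceq \Ss_{l+1} \preceq \Ss_l$; invoking the cited theorems of \cite{schmuck2015comparing} makes the prefix-realizability and domino-rule verifications automatic, which is why the authors present this as a direct consequence.
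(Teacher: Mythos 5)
Your proof is correct, but it takes a genuinely different route from the paper: the paper offers no verification at all, simply declaring the proposition ``a direct consequence of Theorems 6 and 7 from \cite{schmuck2015comparing}'' (i.e., the general refinement theory of strongest $l$-complete approximations), whereas you carry out the elementary check of Definition~\ref{def:sim} for two explicit relations. Your two witnesses are exactly the right ones --- $\Rs_l$ itself for $\Ss \preceq \Ss_l$, and the last-symbol truncation map for $\Ss_{l+1} \preceq \Ss_l$ --- and the key mechanisms you identify (totality and determinism of $\kappa$ so that every $\xv$ realizes a unique length-$l$ sequence; the index shift of the membership conditions \eqref{eq:sequence} under $\xv \mapsto \Mm(k_1)\xv$; prefix-realizability guaranteeing truncations land in $\Xs_l = \pi_{\Rs_l}(\Xs)$; and the domino rule supplying the matching abstract transitions) are precisely what make the cited theorems applicable here. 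One cosmetic point: the closing appeal to transitivity is unnecessary, since the claimed chain is just the conjunction of the two simulations you already established. The trade-off is the usual one --- the paper's citation is shorter and situates $\Ss_l$ within a general abstraction framework, while your argument is self-contained, makes the role of determinism and time-invariance explicit, and would survive even for a reader unfamiliar with $l$-complete approximation theory.
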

	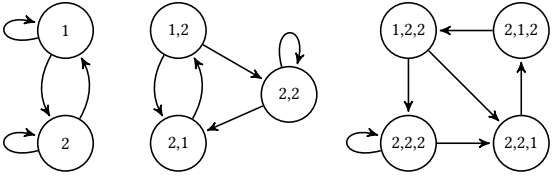
\begin{figure}[tb]
		\begin{center}
			\begin{footnotesize}
				\begin{tikzpicture}[->,>=stealth',shorten >=1pt, auto, node distance=1.5cm,
					semithick]
					\tikzset{every state/.style={minimum size=3em, inner sep=2pt}}
					
					\node[state] 		 (1)    {1};
					\node[state]         (2) [below of=1] {2};
					
					\path (1) edge[bend right] (2) (2) edge[bend right] (1)
					(1) edge [loop left] (1)
					(2) edge [loop left] (2);
					
					\node[state] 		 (12) [right of=1]  {1,2};
					\node[state]		 (21) [below of=12] {2,1};
					\node[state]		 (22) at ([shift=({-30:1.7 cm})]12) {2,2};
					
					\path (12) edge (22) (22) edge [loop above] (22) (22) edge (21) (21) edge[bend right] (12) (12) edge[bend right] (21);
					
					\node[state] 		 (122) [right=2.3cm of 12]  {1,2,2};
					\node[state] 		 (212) [right of=122]     {2,1,2};
					\node[state] 		 (222) [below of=122]     {2,2,2};
					\node[state] 		 (221) [right of=222]     {2,2,1};
					
					\path (221) edge (212) (212) edge (122) (122) edge (222) edge (221) (222) edge[loop left] (222) (222) edge (221);

				\end{tikzpicture}
			\end{footnotesize}
			\caption{\label{fig:Sl} Example of $l$-complete PETC traffic models, for $l=1$ (left), $l=2$ (middle), and $l=3$ (right).}
			\vspace{-1.5em}
		\end{center}
	\end{figure}
	
	\fakeparagraph{Periodic set solutions of a PETC} 
	The result above allows us to construct finite-state abstractions of system \eqref{eq:S} with increased precision by increasing $l$; for each $l$, Karp's algorithm \cite{karp1978characterization} (with the corrections from \cite{chaturvedi2017note}) can be used to detect the mean average cycles. What is missing is a method to verify any such cycle in the concrete system. To this end, we need to confirm whether there is a subset of $\R^n$ that is a periodic solution with the desired pattern. Given that the recursion \eqref{eq:samples} is piecewise linear, and due to the homogeneity of the sets $\Q_k$ of Eq.~\eqref{eq:setq}, linear subspaces suffice to characterize these periodic solutions in the general case.\footnote{A set $\Qs$ is homogeneous if $\xv \in \Qs \implies c\xv \in \Qs, \ \forall c \in \R \setminus \{0\}.$ In the general case, every solution of a linear recurrence $\xiv(k+1) = \Mm\xiv(k)$ converges asymptotically to an invariant linear subspace of $\Mm$, hence verifying these invariants suffice. A formal proof is left out due to space limitations.}
	
	\begin{defn}[Periodic subspace solution] 
	A linear subspace $\As \subseteq \R^\nx$ is a \emph{periodic subspace solution} of Eq.~\eqref{eq:samples} if there exists $J\in\N$ such that, for every $\xv \in \As \setminus \{\emph{\O}\}$, $\xiv(t_i) = \xv \implies \xiv(t_{i+J}) \in \As$. 
	\end{defn}
	\begin{rem}
		We remove the origin from the desired set solution (hence abusing the term \emph{linear} subspace) because it is the only point along a ray that does not trigger like the others, due to the strict inequality in Eq.~\eqref{eq:quadtrig}. Since controllers are typically designed for asymptotic stability, solutions starting away from the origin never reach it. If $\xiv(0) = \O$, it will always trigger at $\bar{k}$, so $\bar{k}^\omega$ is an obvious behavior, although very unlikely to be the minimum-in-average.
	\end{rem}
	\begin{prop}\label{prop:subsol}
		A linear subspace $\As$ is a periodic subspace solution of Eq.~\eqref{eq:samples} if there exists a sequence $k_1, k_2,..., k_J$ such that (i) $\As$ is an invariant of $\Mm(k_J)...\Mm(k_2)\Mm(k_1)$, and (ii)
		\begin{equation}\label{eq:As_squence}
		\begin{aligned}
			\As \setminus \{\O\} &\subseteq \Qs_{k_1},\\
			\Mm(k_1)\As \setminus \{\O\} &\subseteq \Qs_{k_2},\\
			& \vdots \\
			\Mm(k_{J-1})...\Mm(k_1)\As \setminus \{\O\} &\subseteq \Qs_{k_J}.
		\end{aligned}
		\end{equation}
	\end{prop}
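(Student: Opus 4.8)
The plan is to show that conditions (i)--(ii) force every trajectory of \eqref{eq:samples} started inside $\As \setminus \{\O\}$ to realize exactly the inter-sample pattern $k_1 k_2 \cdots k_J$ and to land back in $\As$ after $J$ samples, which is precisely what the definition of a periodic subspace solution requires with period $J$. The engine of the argument is the characterization of the sets $\Qs_k$ in \eqref{eq:setq}: by construction $\xv \in \Qs_k$ holds if and only if $\kappa(\xv) = k$, so membership of a sample in $\Qs_k$ pins down which matrix the recursion \eqref{eq:samples} applies at that step, namely $\Mm(k)$.

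First I would fix an arbitrary $\xv \in \As \setminus \{\O\}$, set $\xiv(t_i) = \xv$, and prove by induction on $j \in \{0,1,\ldots,J-1\}$ that $\xiv(t_{i+j}) = \Mm(k_j)\cdots\Mm(k_1)\xv$ (with the empty product for $j=0$ read as the identity) and that $\xiv(t_{i+j}) \in \Qs_{k_{j+1}}$. The base case $j=0$ is the first line of \eqref{eq:As_squence}, which gives $\xv \in \Qs_{k_1}$. For the inductive step, the membership $\xiv(t_{i+j}) \in \Qs_{k_{j+1}}$ yields $\kappa(\xiv(t_{i+j})) = k_{j+1}$ through the $\Qs$-characterization, so \eqref{eq:samples} produces $\xiv(t_{i+j+1}) = \Mm(k_{j+1})\cdots\Mm(k_1)\xv$, and the corresponding line of \eqref{eq:As_squence} then places this next sample in $\Qs_{k_{j+2}}$, closing the induction. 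Running this to $j = J-1$ delivers $\xiv(t_{i+J}) = \Mm(k_J)\cdots\Mm(k_1)\xv$, whereupon condition (i) — invariance of $\As$ under $\Mm(k_J)\cdots\Mm(k_1)$ — gives $\xiv(t_{i+J}) \in \As$, exactly the required return property.

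The step I expect to require the most care is the implicit requirement, encoded by the $\setminus\{\O\}$ appearing throughout \eqref{eq:As_squence}, that each intermediate iterate $\Mm(k_{j-1})\cdots\Mm(k_1)\xv$ is itself nonzero. The sets $\Qs_k$ exclude the origin — at $\O$ the recursion triggers at $\bar{k}$ rather than following the pattern — so if some intermediate sample collapsed to $\O$, the claim that it belongs to $\Qs_{k_{j+1}}$ would fail and the induction would break. I would discharge this either by assuming each $\Mm(k_j)$ is injective on the relevant subspace, so that nonzero vectors stay nonzero, or, following the remark above, by invoking the asymptotic stability of the designed closed loop, under which a trajectory begun away from the origin never reaches it. With this caveat settled, the remainder is the routine unrolling described above.
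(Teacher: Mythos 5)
Your proof is correct and is essentially the paper's own argument: the paper compresses exactly this unrolling into two sentences (condition (i) gives the return to $\As$ after $J$ steps, condition (ii) forces each iterate into the right $\Qs_{k_j}$ so that \eqref{eq:samples} applies $\Mm(k_{j+1})$ at step $j$), and your induction is just the explicit version of it, using the correct characterization $\xv \in \Qs_k \iff \kappa(\xv) = k$.

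The one place you go beyond the paper is the worry about an intermediate iterate $\Mm(k_j)\cdots\Mm(k_1)\xv$ collapsing to $\O$, and it is worth being precise about what it does and does not threaten. For the proposition as literally stated it is harmless: $\O$ is a fixed point of \eqref{eq:samples} (it triggers at $\bar{k}$ and $\Mm(\bar{k})\O = \O$), and $\O \in \As$ since $\As$ is a linear subspace, so $\xiv(t_{i+J}) = \O \in \As$ and the definition of a periodic subspace solution is still met. What breaks is the stronger conclusion the algorithm actually needs, namely that the trajectory realizes the trace $(k_1\cdots k_J)^\omega$. For that, your first discharge (injectivity of each $\Mm(k_j)$ on the relevant iterated image) is the right one; the appeal to asymptotic stability is not airtight here, because $\Mm(k) = \e^{\Am hk} + \int_0^{hk}\e^{\Am\tau}\d\tau\, \Bm\Km$ can be singular even for a stabilizing $\Km$, so a sampled linear trajectory can reach the origin in finitely many steps. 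The paper silently ignores this degenerate case, so flagging it is a genuine (if minor) improvement rather than a gap in your own argument.
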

	\begin{proof}
		First, $\As$ must be a fixed set of Eq.~\eqref{eq:samples} iterated $J$ times, hence condition (i). Second, for a particular sequence $k_1, k_2,..., k_J$, all points in this set must satisfy Eq.~\eqref{eq:sequence}, which is what is displayed in Eq.~\eqref{eq:As_squence}.
	\end{proof}

	Proposition \ref{prop:subsol} shows how one can find if a sequence associated with the minimum average cycle of a similar model is indeed a periodic solution of the PETC system. First, determine the invariants of $\Mm(k_J)...\Mm(k_2)\Mm(k_1)$, then, verify if they satisfy Eq.~\ref{eq:As_squence}. For this latter part, first remember that each $\Qs_k$ is an intersection of quadratic sets (see Eq.~\eqref{eq:setq}). Then, the following simple result can be used to check if the linear space is a subset of a given quadratic set:
	\begin{prop}\label{prop:subspaceposdef}
		Let $\As$ be a linear subspace with basis $\vv_1, \vv_2,..., \vv_m$, and let $\Vm$ be the matrix composed of the vectors $\vv_i$ as columns. Let $\Qm \in \S^n$ be a symmetric matrix and define $\Qs_n \coloneqq \{\xv \in \R^n \mid  \xv\tran\Qm\xv \geq 0\}$ and $\Qs_s \coloneqq \{\xv \in \R^n \mid  \xv\tran\Qm\xv > 0\}$. Then, $\As \setminus \{\O\} \subseteq \Qs_n$ (resp.~$\Qs_s$) if and only if $\Vm\tran\Qm\Vm \succeq \O$ (resp.~$\Vm\tran\Qm\Vm \succ \O$).
	\end{prop}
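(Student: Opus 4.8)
The plan is to parametrize the subspace $\As$ by its basis and reduce the quadratic-form conditions on $\As$ to a single matrix condition on the $m \times m$ matrix $\Vm\tran\Qm\Vm$. Since $\vv_1, \dots, \vv_m$ form a basis of $\As$, every $\xv \in \As$ can be written uniquely as $\xv = \Vm\cv$ for some $\cv \in \R^m$, and because the columns of $\Vm$ are linearly independent, the map $\cv \mapsto \Vm\cv$ is a bijection from $\R^m$ onto $\As$; in particular $\Vm\cv = \O$ if and only if $\cv = \O$. Substituting then yields the central identity $\xv\tran\Qm\xv = \cv\tran(\Vm\tran\Qm\Vm)\cv$, which transfers every statement about the form $\xv \mapsto \xv\tran\Qm\xv$ on $\As$ into a statement about the form $\cv \mapsto \cv\tran(\Vm\tran\Qm\Vm)\cv$ on all of $\R^m$.

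For the semidefinite claim I would first observe that $\O \in \Qs_n$ trivially, since $\O\tran\Qm\O = 0 \geq 0$, so that $\As \setminus \{\O\} \subseteq \Qs_n$ is equivalent to $\As \subseteq \Qs_n$, i.e.\ to $\xv\tran\Qm\xv \geq 0$ for all $\xv \in \As$. Via the identity above and the surjectivity of $\cv \mapsto \Vm\cv$ onto $\As$, this holds if and only if $\cv\tran(\Vm\tran\Qm\Vm)\cv \geq 0$ for every $\cv \in \R^m$, which is precisely the definition of $\Vm\tran\Qm\Vm \succeq \O$.

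For the definite claim, $\As \setminus \{\O\} \subseteq \Qs_s$ means $\xv\tran\Qm\xv > 0$ for every nonzero $\xv \in \As$. Using the bijection, a nonzero element $\xv = \Vm\cv$ corresponds exactly to a nonzero coefficient vector $\cv$, so this condition is equivalent to $\cv\tran(\Vm\tran\Qm\Vm)\cv > 0$ for all $\cv \neq \O$, i.e.\ $\Vm\tran\Qm\Vm \succ \O$. The removal of the origin from $\As$ is exactly what makes this equivalence go through, since the strict inequality can never hold at $\xv = \O$.

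There is no genuinely hard step here; the only point that requires care is the role of the linear independence of $\vv_1, \dots, \vv_m$. It simultaneously guarantees that the substitution $\xv = \Vm\cv$ ranges over \emph{all} of $\As$ and that nonzero coefficient vectors $\cv$ correspond precisely to nonzero elements $\xv \in \As$ — which is what cleanly separates the definite from the semidefinite case. Were the $\vv_i$ not a basis (e.g.\ linearly dependent spanning vectors), the forward map would still be well defined but the kernel of $\Vm$ would inject a spurious null direction, breaking the definite equivalence.
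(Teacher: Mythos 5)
Your proof is correct and follows the same route as the paper's: parametrize $\As$ as the image of $\Vm$ and pull the quadratic form back to $\R^m$ via $\xv\tran\Qm\xv = \cv\tran\Vm\tran\Qm\Vm\cv$. You are merely more explicit than the paper, which treats only the strict case and calls the other analogous, whereas you spell out both directions and the role of linear independence in matching nonzero $\cv$ to nonzero $\xv$.
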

	\begin{proof}
		For brevity, let us consider the strict inequality case (the other is analogous). First, note that $\As = \{\Vm\av \mid \av \in \R^m\}.$ Hence, if we want all points in $\As$ to belong to $\Qs_s$, we need that $$ \forall \av \in \R^m \setminus \{\O\}, \ \av\tran\Vm\tran\Qm\Vm\av > 0,$$ which is exactly the definition of $\Vm\tran\Qm\Vm \succ \O$.
	\end{proof}
	To have a finite collection of subspaces to be checked, we rely on the following assumption:
	\begin{assum}\label{assum:diffeigs}
		Let $k_1,k_2,...,k_J \in \MAC(\Ss_l)$. Then, the matrix $\Mm(k_J)...\Mm(k_2)\Mm(k_1)$ has no repeated eigenvalues.
	\end{assum}
	\begin{rem} If Assumption \ref{assum:diffeigs} holds, then $\Mm(k_J)...\Mm(k_2)\Mm(k_1)$ has at most $2^\nx$ real invariant subspaces, which are the linear combinations of its eigenvectors.\footnote{The special case of repeated eigenvalues require some technicalities and is left out of this paper due to space considerations. } \end{rem}
	
	\fakeparagraph{The MACE simulation algorithm} Algorithm \ref{alg} summarizes the method to obtain a MACE simulation of a given PETC system \eqref{eq:plant}--\eqref{eq:quadtrig}. In the outer loop, the relation $\Rs_l$ and corresponding finite-state system $\Ss_l$ are built, followed by the computation of one of its minimum average cycles. Then, an inner loop looks for linear subspaces associated with this cycle that satisfies Prop.~\ref{prop:subsol}; if one is found, the algorithm terminates. Otherwise, $l$ is incremented and the main loop is repeated. An important remark is that, if a cycle $\sigma^\omega$ does not exist in the concrete system, after sufficiently many iterations it will not be exhibited in the abstraction, as for some $N$, $\sigma^N$ will not be a sequence satisfying Eq.~\eqref{eq:sequence}.
	\begin{algorithm}\caption{\label{alg}MACE simulation algorithm}
		\begin{flushleft}
		\hspace*{\algorithmicindent} \textbf{Input:} $h, \Ys$, $\ \Mm(k), \Qs_k, \forall k \in \Ys$ \\
		\hspace*{\algorithmicindent} \textbf{Output:} $l, \Ss_l, \mathtt{value}, \mathtt{cycle}, \mathtt{MAIST}$
		\end{flushleft}
		\begin{algorithmic}[1]
			\State $l \gets 1$
			\While{true}
				\State Build $\Rs_l$ and $\Ss_l$ \Comment{(Defs.~\ref{def:bisimrel} and \ref{def:lsim})}
				\State $\mathtt{value} \gets V(\Ss_l), \ \mathtt{cycle} \gets \MAC(\Ss_l)$ \Comment{\cite{karp1978characterization, chaturvedi2017note}}
					\State $\Vs \gets \text{eigenvecs}(\Mm_{k_m}...\Mm_{k_2}\Mm_{k_1})$
					\Comment{$k_1k_2...k_m = \mathtt{cycle}$}
					\For{$\Vs' \in 2^\Vs \setminus \{\emptyset\}$}
						\State $\As \gets \text{span}(\Vs')$
						\If{$\As$ satisfies Prop.~\ref{prop:subsol} with $k_1,k_2,...,k_m$}
							\State $\mathtt{MAIST} \gets h \cdot \mathtt{value}$
							\State \Return
						\EndIf
					\EndFor
				\State $l \gets l+1$
			\EndWhile
		\end{algorithmic}
	\end{algorithm}

	For an example, refer to Fig.~\ref{fig:Sl}, and assume the cycle $(1,2,2)^\omega$ is a trace of the concrete system. For $l=1$, the only MAC is $1^\omega$, but it is not verified in the PETC. For $l=2$, this cycle is broken, and the MAC becomes $(1,2)^\omega$, again unverified. Finally, for $l=3$, the MAC is $(1,2,2)^\omega$, which is verified and the algorithm terminates.
	
	\begin{rem}\label{rem:semi}
		Algorithm \ref{alg} is in fact a semi-algorithm: it terminates if the concrete system exhibits a periodic trace whose LimAvg value is the smallest among all of its solutions. This is not always the case with PETC, as it may exhibit non-periodic traces. E.g., consider system \eqref{eq:plant}--\eqref{eq:samples} with $\bar{k} = 2, \Mm(1) = \Mm(2) = \alpha\begin{bsmallmatrix}1 & 2 \\ -2 & 1\end{bsmallmatrix}, \Nm(1) = \begin{bsmallmatrix}0 & 1 \\ 1 & 0\end{bsmallmatrix}.$ Matrices $\Mm(k)$ have eigenvalues equal to $\alpha(1 \pm 2\mathrm{i})$, which means that solutions rotate (up to rescaling) on the plane by the irrational angle $\arctan(2)$. $\Nm(1)$ selects the first and third quadrants to output 1, whereas the others output 2. Identifying $\xv \sim \lambda \xv, \ \forall \lambda \in \R \setminus \{0\}$, this system is topologically equivalent to an irrational rotation on the circle, which does not exhibit periodic behavior.
	\end{rem}
	
	\section{Numerical example}
	
	Consider the system \eqref{eq:plant} with
	$$ \Am = \begin{bmatrix}0 & 1 \\ -2 & 3 \end{bmatrix}, \Bm = \begin{bmatrix}0 \\ 1\end{bmatrix}, \Km =\begin{bmatrix}0 & -5\end{bmatrix}, $$
	and the triggering condition of \cite{tabuada2007event}, $|\xiv(t) - \hat{\xiv}(t)| > \sigma|\xiv(t)|$ for some $0 < \sigma < 1$, which can be put in the form Eq.~\eqref{eq:quadtrig}. Checking time was set to $h=0.05$, and maximum inter-sample time to $\bar{k}=20$. We implemented Algorithm \ref{alg} in Python using Z3 \cite{demoura2008z3} to solve Eq.~\eqref{eq:sequence}, and attempted to compute its MAIST through a MACE simulation for $\sigma \in \{0.1, 0.2, 0.3, 0.4, 0.5\}$. Table \ref{tab} presents the MAIST for each $\sigma$, as well as the $l$ value (Def.~\ref{def:lsim}) where it was obtained. Only for $\sigma = 0.1$ the algorithm did not terminate before $l=50$: for this case, the actual $\bar{k}$ of the system was 3, and all $\Mm(k), k \leq 3,$ have complex eigenvalues. Thus, it is likely that it does not have periodic behaviors, similarly to what is discussed in Remark \ref{rem:semi}.\footnote{It cannot be easily proven, however, whether there is no admissible product of such matrices with real eigenvalues.} Nonetheless, the LimAvg value of the \emph{maximum} average cycle of $\Ss_l$ is 0.0798 (after multiplying by $h$); since this is an upper bound for the MAIST (same arguments as Prop.~\ref{prop:bound}), we know that the estimate is within only 0.0012 of the real value. For the other cases, trivial cycles were found for $\sigma = 0.4 (5^\omega)$ and $\sigma = 0.5 (6^\omega)$, but it took a few iterations to break, e.g., the $2^\omega$ loop. Interestingly, the simplest cycles for $\sigma = 0.2$ and $\sigma = 0.3$ had length, respectively, 27 and 28, showing that PETC can often lead to very complex recurring patterns.
	
	\begin{table}\caption{\label{tab} MAIST values for the numerical example}
	\begin{tabular}{c|ccccc}
		\hline
		$\sigma$ & 0.1 & 0.2 & 0.3 & 0.4 & 0.5 \\
		\hline
		$l$ & 50* & 15 & 26 & 12 & 10 \\
		MAIST & 0.0786 & 0.137 & 0.171 & 0.25 & 0.3 \\
		CPU time [s] & 327 & 41 & 147 & 29 & 45 \\
		\hline
	\end{tabular}

	\begin{flushleft}
		{\footnotesize * Algorithm interrupted before finding a verified cycle.}
	\end{flushleft}
		\vspace{-1em}
	\end{table}
	
	\section{CONCLUSIONS}
	
	We have presented a method to compute the sampling performance of PETC, namely its minimum average inter-sample time, by means of abstractions. For that, we observed that limit average metrics of finite-state priced automata are (easily) computable; then, using behavioral inclusion properties of abstractions, we introduced the concept of minimum-average-cycle-equivalent simulations, together with a semi-algorithm that can compute the MAIST of linear PETC systems whenever their limit behaviors are periodic. When behaviors are aperiodic, the algorithm still provides good approximations, increasingly with higher values of $l$. 
	
	Future work is aimed at how to use these methods to design better triggering mechanisms particularly for sampling performance and schedulability. As we have noted in the previous section, computing the maximum AIST is also possible with the same tools; since a simulation model has embedded in it which states lead which cycles, one could consider modifying the sampling strategy for the states that lead to the MAIST, in order to steer them towards the system's most sample-efficient behaviors.	%
	
	\begin{acks}
		This work is supported by the \grantsponsor{GSERC}{European Research Council}{https://erc.europa.eu/} through the SENTIENT project, Grant No.~\grantnum[https://cordis.europa.eu/project/id/755953]{GSERC}{ERC-2017-STG \#755953}.
	\end{acks}
	
	\bibliographystyle{ieeetr} 
	\bibliography{mybib} 
		
\end{document}